\spnewtheorem{alg}{Algorithm}{\bfseries}{\rmfamily}
\journalname{Des. Codes Cryptogr.}
\begin{document}

\title{Computing coset leaders of binary codes\thanks{Partially supported by Spanish {AECID A/016959/08}.}
%about the article that should go on the front page should be
%placed here. General acknowledgments should be placed at the end of the article.}
}
%\subtitle{Do you have a subtitle?\\ If so, write it here}

%\titlerunning{Short form of title}        % if too long for running head

\author{M. Borges-Quintana\and M.A.~Borges-Trenard\and E.~Mart\'{\i}nez-Moro
}

%\authorrunning{Short form of author list} % if too long for running head

\institute{M. Borges-Quintana \and M.A. Borges-Trenard  \at
              Dpto. de Matem\'atica, Faculty of Mathematics and Computer Science,   
              Universidad de Oriente,\\ Santiago de Cuba, Cuba.\\
	            %Partially supported by Spanish AECI A/7445/07.\\
              \email{mijail@csd.uo.edu.cu, mborges@csd.uo.edu.cu}          %  \\
%             \emph{Present address:} of F. Author  %  if needed
           \and
           E. Mart\'{\i}nez-Moro \at
           Dpto. de Matem\'atica Aplicada, Universidad  de Valladolid, Castilla, Spain.\\ 
           \email{edgar@maf.uva.es}
}

\date{Received: date / Revised: date}
% The correct dates will be entered by the editor
\maketitle
\begin{abstract} 

We present an algorithm for computing the set of all coset leaders of a binary code $\mathcal C \subset \mathbb{F}_2^n$. The method is adapted from  some of the techniques related to the computation of Gr\"obner representations associated with codes.  
The algorithm provides a Gr\"obner representation of the binary code and the set of coset leaders $\mathrm{CL}(\mathcal C)$. Its efficiency  stands of the fact that its complexity is linear on the number of elements of $\mathrm{CL}(\mathcal C)$, which  is  smaller than  exhaustive search in $\mathbb{F}_2^n$.

\keywords{Binary codes \and Cosets leaders \and  Gr\"obner representations.}
\end{abstract}

\section{Introduction}\label{s:int}
The error-correction problem in coding theory addresses given a received word recovering  the codeword  closest to it with respect to the Hamming  
distance. This previous  statement is the  usual formulation of the  \textit{Complete Decoding Problem} (CDP). The $t$-bounded  distance decoding ($t$-BDD) algorithms determine a codeword (if such a word exists) which is at distance less or equal to $t$ to the received word. If $t$ is the covering radius of the code then the bounded distance decoding problem is the same as CDP.
In the CDP of a linear code  of length $n$ , $\mathcal C\subset\mathbb F_q^n$ those errors that can be corrected are  just the coset leaders, which are vectors of smallest weight in the cosets  $\mathbb F_q^n/{\mathcal C}$. When there is more than one leader in a coset there is more than one choice for the error.
Therefore the following problem  naturally follows known as the \textit{coset weights problem}(CWP):
\begin{description}
 \item{\textbf{Input:}} A binary $r\times n$ matrix $H$, a vector $\mathbf s\in \mathbb F_2^r$ and a non-negative integer $t$.
\item{\textbf{Problem:}} Does a binary vector $\mathbf e \in \mathbb F_2^r$ of Hamming weight at most $t$ exist such that $H\mathbf e=s$?
\end{description}
Recall that $H$ can be seen as the parity check matrix of a binary code and $\mathbf s$ the syndrome of a received word, thus the knowledge of $\mathbf e$ would solve the $t$-BDD problem. Unfortunatelly the hope of finding an efficient $t$-BDD algorithm is very bleak since it was proven that the CWP is NP-complete \cite{Berle}. Thus the computation of all the coset leaders is also NP-complete. The study of the set of coset leaders is also related to the study of the set of minimal codewords, which have been used in the Maximum Likelihood Decoding Analysis \cite{Barg} and which are also related to the minimal access structure of secret sharing schemes \cite{massey}. Furthermore, the computation of all coset leaders of a code allows to know more about its internal structure \cite{bbm1}.

All problems mentioned before are considered to be hard computational problems (see for example \cite{Barg,Berle}) even if preprocesing is allowed \cite{BN}. However, taking into account the nature of the problem, to develop an algorithm for computing the set of all coset leaders of a binary code, in the vector space of $2^n$ vectors, by generating a number of vectors close to the cardinality of this set may be quite efficient. This is our purpose extending some results on Gr\"obner representations for binary codes. For previous results and applications of Gr\"obner representations for linear codes see \cite{bbm2,bbm1,bbfm2}, for a summary of the whole material we refer the reader to \cite{bbm3}. We extend some settings of previous work in order  to obtain the set of all coset leaders  keeping record of the additive structure in the cosets. The presentation of the paper is done in a ``{Gr\"obner bases}''-free context.

The outline of the paper is as follows. In Section~\ref{s:pre} we review some of the standard facts on binary linear codes  and  their Gr\"obner representation. 
Section~\ref{s:cl} gives a concise presentation of the results in this paper. Definition~\ref{d:list} corresponds to the construction of $\mathrm{List}$, the main object that is used in the algorithm proposed, while Theorem~\ref{t:clInList} guarantees that all coset leaders will belong to $\mathrm{List}$. In this section it is presented the algorithm {\rm CLBC} for computing the set of all coset leaders. At the end of the section we show an example to illustrate a computational approach applied to a binary linear code  with 64 cosets and 118 coset leaders.
In Section~\ref{s:comp} we discuss some complexity issues. Finally some conclusion are given which include further research.

\section{Preliminaries}\label{s:pre}
%\subsection{\xout{Binary codes}}\label{s:bc}
Let $\mathbb{F}_2$ be  the finite field with $2$   
elements and $\mathbb{F}_2^n$ be the $\mathbb{F}_2$-vector space of dimension $n$.   
We will call the vectors in  $\mathbb{F}_2^n$ \textit{words}. A linear code $\mathcal C$ of dimension $k$ and length $n$  is the image of an injective linear mapping $L: \mathbb{F}_2^k\to \mathbb{F}_2^n$,  where $k\leq n$, i.e. $\mathcal{C}=L(\mathbb{F}_2^k)$. From now on, we will use the term code to mean binary linear code. The elements in $\mathcal C$ are called {\sl codewords}. For a word $\mathbf y\in\mathbb F_2 ^n$  the \textit{support} of $\mathbf y$ is $\mathrm{supp}(\mathbf y)=\{ i\in \{1,\ldots, n\}\mid \mathbf y_i\neq 0 \}$ and \textit{the Hamming  weight of} $\mathbf y$  is given by $\mathrm{weight}(\mathbf y)$ the cardinal of $\mathrm{supp}(\mathbf y)$. The Hamming distance between two words $\mathbf c_1, \mathbf c_2$ is $\mathrm{d}(\mathbf c_1,\mathbf c_2)=\mathrm{weight}(\mathbf c_1-\mathbf c_2)$ and the minimum distance $d$  
of a code is the minimum weight among all the non-zero codewords.  It is well known that CDP has a unique solution for those vectors in $B({\mathcal C},t)=\{\mathbf y\in  
\mathbb{F}_2^n\mid \exists\, \mathbf c\in\mathcal{C}\hbox{ s.t. } d(\mathbf c,\mathbf y)\leq  \left[\frac{d-1}{2}\right] \} $ where $\left[\cdot\right]$ is the  
greatest integer function.

\begin{definition}
 The words of minimum Hamming weight in the cosets of $\mathbb{F}_2^n/\mathcal C$ are called \textit{coset leaders}.
\end{definition}

Cosets corresponding to  $B({\mathcal C},t)$ have a unique leader however in general outside $B({\mathcal C},t)$ there may be also cosets with a unique leader, i.e. those cosets with only one leader could be  more than $| B({\mathcal C},t)|$. 
Let $\mathrm{CL}(\mathcal C)$ denote the set of coset leaders of the code $\mathcal C$ and 
$\mathrm{CL}(\mathbf y)$ the subset of coset leaders corresponding to the coset $\mathcal C+\mathbf y$. Let $\mathbf e_i$ $i=1,\ldots, n$ be  the $i$-th vector of the canonical basis $\mathrm X=\{\mathbf e_1, \ldots,\,\mathbf e_n\}\subset \mathbb{F}_2^n$  and $\mathbf 0$ the zero vector of $\mathbb{F}_2^n$.
The following theorem gives some relations between cosets  \cite[Corollary~11.7.7]{Pless}.  
\begin{theorem}\label{t:cl}  
Let $\mathbf w \in \mathrm{CL}(\mathcal C)$, such that $\mathbf w = \mathbf w_1 + \mathbf e_i$ for some $\mathbf w_1 \in \mathbb{F}_2^n$ and $i \in \{1,\ldots, n\}$. Then, $\mathbf w_1 \in \mathrm{CL}(\mathbf w_1)$.  
\end{theorem}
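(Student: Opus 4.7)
The plan is to argue by contradiction. Suppose $\mathbf{w}_1$ is not a coset leader of its own coset, so there exists $\mathbf{v}\in\mathcal C+\mathbf{w}_1$ with $\mathrm{weight}(\mathbf{v})<\mathrm{weight}(\mathbf{w}_1)$; writing $\mathbf{v}=\mathbf{c}+\mathbf{w}_1$ with $\mathbf{c}\in\mathcal C$ and using the hypothesis $\mathbf{w}=\mathbf{w}_1+\mathbf{e}_i$, the translate $\mathbf{v}+\mathbf{e}_i=\mathbf{c}+\mathbf{w}$ lies in $\mathcal C+\mathbf{w}$. The goal is then reduced to showing $\mathrm{weight}(\mathbf{v}+\mathbf{e}_i)<\mathrm{weight}(\mathbf{w})$, which would contradict $\mathbf{w}\in\mathrm{CL}(\mathcal C)$.

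The computational core is a small estimate based on the elementary fact that, for any binary vector $\mathbf{u}$, adding $\mathbf{e}_i$ shifts $\mathrm{weight}(\mathbf{u})$ by exactly $\pm 1$, according to whether $u_i$ is $0$ or $1$. Applied to $\mathbf{v}$ this gives $\mathrm{weight}(\mathbf{v}+\mathbf{e}_i)\leq \mathrm{weight}(\mathbf{v})+1$, and applied to $\mathbf{w}_1$ it relates $\mathrm{weight}(\mathbf{w}_1)$ to $\mathrm{weight}(\mathbf{w})$. Chaining these with $\mathrm{weight}(\mathbf{v})\leq \mathrm{weight}(\mathbf{w}_1)-1$ should then produce the required strict inequality and close the contradiction.

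The real obstacle, and the only point requiring care, is the weight comparison between $\mathbf{w}_1$ and $\mathbf{w}$. The relation $\mathbf{w}=\mathbf{w}_1+\mathbf{e}_i$ yields $\mathrm{weight}(\mathbf{w}_1)=\mathrm{weight}(\mathbf{w})-1$ precisely when $w_i=1$, i.e.\ when $\mathbf{w}_1$ is obtained from $\mathbf{w}$ by flipping one of its nonzero coordinates to zero; this is the situation envisaged by the cited Corollary~11.7.7 of Pless and the one used later in the algorithm, where coset leaders are reconstructed coordinate by coordinate. In that case the chain $\mathrm{weight}(\mathbf{v}+\mathbf{e}_i)\leq \mathrm{weight}(\mathbf{v})+1\leq \mathrm{weight}(\mathbf{w}_1)=\mathrm{weight}(\mathbf{w})-1<\mathrm{weight}(\mathbf{w})$ finishes the argument. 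In the complementary case $w_i=0$ the statement is in fact false (for instance, in the $[3,1]$ repetition code, $\mathbf{w}=\mathbf{e}_1$ is a coset leader but $\mathbf{w}_1=\mathbf{e}_1+\mathbf{e}_2$ is not, its coset being $\{(1,1,0),(0,0,1)\}$), so before starting the formal proof I would make the support condition $i\in\mathrm{supp}(\mathbf{w})$ an explicit hypothesis, or equivalently require $\mathrm{supp}(\mathbf{w}_1)\subsetneq\mathrm{supp}(\mathbf{w})$. Once that adjustment is in place, the whole proof reduces to the two-line weight calculation above.
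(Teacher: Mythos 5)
Your argument is correct, and there is nothing in the paper to compare it against line by line: the paper does not prove this statement at all, it simply cites it as Corollary~11.7.7 of Huffman--Pless. Your contradiction argument (translate a lighter element $\mathbf v$ of $\mathcal C+\mathbf w_1$ by $\mathbf e_i$ into the coset of $\mathbf w$ and chain the weight estimates) is the standard proof of that corollary, so the mathematical content is right. Your more valuable contribution is the observation that the statement as transcribed in the paper is literally false without the hypothesis $i\in\mathrm{supp}(\mathbf w)$ (equivalently, $\mathrm{weight}(\mathbf w_1)=\mathrm{weight}(\mathbf w)-1$): your repetition-code counterexample with $\mathbf w=\mathbf e_1$, $\mathbf w_1=\mathbf e_1+\mathbf e_2$ is valid. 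This matches the original formulation in Huffman--Pless, where $\mathbf w$ is decomposed into summands with \emph{disjoint supports}, and it is consistent with the only way the theorem is actually used in the paper: in the proof of Theorem~\ref{t:clInList} the index $i$ is always taken in $\mathrm{supp}(\mathbf w)$, and Remark~\ref{r:dlist} likewise only ever adds $\mathbf v+\mathbf e_i$ for $i\notin\mathrm{supp}(\mathbf v)$. So the missing hypothesis is a transcription slip in the paper rather than a flaw in your proof; with that hypothesis made explicit, your two-line weight calculation closes the argument completely.
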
  

\begin{definition}
 A  \textit{Gr\"obner representation} of $\mathbb F_2^n/\mathcal C$ \cite{bbm2,bbm3} is a pair $\mathrm N,\phi$ where ${\mathrm N}$ is a transversal of $\mathbb F_2^n/\mathcal C$ such that $\mathbf 0\in N$ and for each $\mathbf n\in {\mathrm N} \setminus\{\mathbf 0\}$ there exists $\mathbf e_i$, $i \in 1,\ldots,n$, such that $\mathbf n=\mathbf n^\prime+\mathbf e_i$ and $\mathbf n^\prime\in {\mathrm N}$, and $\phi: {\mathrm N}\times\{\mathbf e_i\}_{i=1}^n\rightarrow  {\mathrm N}
$ be the function that maps each pair $(\mathbf n,\mathbf e_i)$ to the element of  ${\mathrm N}$ which belongs to the coset of $\mathbf n+\mathbf e_i$.  
\end{definition}

\section{Computing the set of coset leaders}\label{s:cl}
A key point of the algorithms for computing Gr\"obner representations is the construction of an object we called $\mathrm{List}$ which is an ordered set of elements of  $\mathbb{F}_2^n$ w.r.t. a linear order $\prec$ defined as follows: $\mathbf{w}\prec\mathbf{v}$ if $\mathrm{weight}(\mathbf{w})<\mathrm{weight}(\mathbf{v})$ or $\mathrm{weight}(\mathbf{w})=\mathrm{weight}(\mathbf{v})$ and $\mathrm{weight}(\mathbf{w})\prec_{1}\mathrm{weight}(\mathbf{v})$, where $\prec_{1}$ is any admissible order on $\mathbb{F}_2^n$ in the sense given in \cite[p. 167]{Weisp}. We will call such kind of orders as weight compatible orderings.

\begin{definition}[Construction of $\mathrm{List}$].\label{d:list} Let $\mathrm{List}$ be the ordered structure given by the following axioms 
\begin{enumerate}
\item $\mathbf 0 \in \mathrm{List}$.
\item If $\mathbf v \in \mathrm{List}$, let $\mathrm{N}(\mathbf v)=\min_{\prec}\{\mathbf{w} \mid \mathbf w \in \mathrm{List} 
\cap (\mathcal C+\mathbf v)\}$.
\item If $\mathbf v \in \mathrm{List}$ is such that $\mathrm{weight}(\mathbf v)=\mathrm{weight}(\mathrm N(\mathbf v))$, then $\{\mathbf v + \mathbf e_i\,: i\notin \mathrm{supp}(\mathbf v), \,i\in \{1,\ldots, n\}\} \subset \mathrm{List}$.%% Tener en cuenta no insetar las que est\'an en el soporte.
\end{enumerate}
\end{definition}

\begin{remark}\label{r:NL}
The set $\mathrm{N}$ is the subset of $\mathrm{List}$ such that $\mathrm{N}(\mathbf v)$ is the least element in $\mathrm{List} \cap (\mathcal C +\mathbf v)$, i.e. $\mathrm{N}(\mathbf v) \in \mathcal C +\mathbf v$ and $\mathrm{N}(\mathbf v)=\min_{\prec}\{\mathbf{w} \mid \mathbf w \in \mathrm{List} 
\cap (\mathcal C+\mathbf v)\}$.\end{remark}

\begin{remark}\label{r:dlist}
We start $\mathrm{List}$ with $\mathbf 0$ and we will see that, whenever condition \textit{3.} holds for $\mathbf v$, the vector $\mathbf v$ will be a coset 
leader of $\mathcal C+\mathbf v$. Then, we insert $\mathbf v + \mathbf e_i$ to $\mathrm{List}$, for $i=1,\ldots ,n$ and $i\notin 
\mathrm{supp}(\mathbf v)$ (see Theorem \ref{t:cl}). It is not necessary to introduce $\mathbf v + \mathbf e_i$, for $i\in\mathrm{supp}(\mathbf v)$, because in this case $\mathbf v + \mathbf e_i \prec\mathbf v$ and then $\mathbf v + \mathbf e_i$ has been already considered in $\mathrm{List}$ since it is an ordered structure.\end{remark}

Next theorem states that $\mathrm{List}$ in Definition~\ref{d:list}  includes the set of coset leaders.
\begin{theorem}\label{t:clInList}
Let $\mathbf w \in \mathrm{CL}(\mathcal C)$, then $\mathbf w \in \mathrm{List}$.
\end{theorem}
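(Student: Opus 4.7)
The plan is to proceed by induction on the Hamming weight of the coset leader $\mathbf{w}$.

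The base case $\mathrm{weight}(\mathbf{w}) = 0$ is immediate: the only such $\mathbf{w}$ is $\mathbf{0}$, which lies in $\mathrm{List}$ by axiom 1 of Definition~\ref{d:list}.

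For the inductive step, I assume every coset leader of weight strictly less than $k$ belongs to $\mathrm{List}$, and I pick $\mathbf{w}\in\mathrm{CL}(\mathcal C)$ with $\mathrm{weight}(\mathbf{w})=k\geq 1$. Choosing any $i\in\mathrm{supp}(\mathbf{w})$ and writing $\mathbf{w}=\mathbf{w}_1+\mathbf{e}_i$ produces a word $\mathbf{w}_1$ with $\mathrm{weight}(\mathbf{w}_1)=k-1$ and, crucially, $i\notin\mathrm{supp}(\mathbf{w}_1)$. Theorem~\ref{t:cl} then tells us that $\mathbf{w}_1\in\mathrm{CL}(\mathbf{w}_1)$, so the inductive hypothesis places $\mathbf{w}_1$ in $\mathrm{List}$. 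The goal is now to show that axiom 3 of Definition~\ref{d:list} fires at $\mathbf{w}_1$, which would give us exactly the insertion $\mathbf{w}_1+\mathbf{e}_i=\mathbf{w}\in\mathrm{List}$ that we want.

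To trigger axiom 3 we need $\mathrm{weight}(\mathbf{w}_1)=\mathrm{weight}(\mathrm{N}(\mathbf{w}_1))$. On the one hand, $\mathbf{w}_1\in \mathrm{List}\cap(\mathcal C+\mathbf{w}_1)$, so by the minimality in the definition of $\mathrm{N}(\mathbf{w}_1)$ we get $\mathrm{N}(\mathbf{w}_1)\preceq \mathbf{w}_1$, hence $\mathrm{weight}(\mathrm{N}(\mathbf{w}_1))\leq \mathrm{weight}(\mathbf{w}_1)$ by the definition of the weight-compatible ordering $\prec$. On the other hand, $\mathrm{N}(\mathbf{w}_1)$ lies in the coset $\mathcal C+\mathbf{w}_1$, and since $\mathbf{w}_1$ is a coset leader of that coset, $\mathrm{weight}(\mathrm{N}(\mathbf{w}_1))\geq \mathrm{weight}(\mathbf{w}_1)$. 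Combining the two inequalities yields the required equality, axiom 3 applies, and $\mathbf{w}\in\mathrm{List}$.

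The potentially delicate point, though ultimately routine, is the interplay between the two roles of $\mathbf{w}_1$: it is both a member of $\mathrm{List}$ (which bounds $\mathrm{N}(\mathbf{w}_1)$ from above in $\prec$) and a coset leader of its own coset (which bounds $\mathrm{N}(\mathbf{w}_1)$ from below in weight). One has to be careful that $\prec$ is weight-compatible, so $\prec$-minimality forces weight-minimality among the elements of $\mathrm{List}\cap(\mathcal C+\mathbf{w}_1)$; this is exactly the property that lets Theorem~\ref{t:cl} be iterated along the coordinates of $\mathbf{w}$ while keeping the induction under control.
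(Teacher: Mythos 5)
Your proof is correct and follows essentially the same route as the paper's: peel off a coordinate $\mathbf e_i$ with $i\in\mathrm{supp}(\mathbf w)$, invoke Theorem~\ref{t:cl} to see that $\mathbf w_1$ is a coset leader, apply the induction hypothesis to place $\mathbf w_1$ in $\mathrm{List}$, and then trigger axiom 3 of Definition~\ref{d:list}. The only cosmetic difference is that you induct on Hamming weight while the paper uses Noetherian induction on $\prec$ (equivalent here since $\prec$ is weight-compatible), and you spell out the two inequalities giving $\mathrm{weight}(\mathbf w_1)=\mathrm{weight}(\mathrm N(\mathbf w_1))$, which the paper leaves as ``clear.''
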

\begin{proof}
We will proceed by Noetherian Induction on the words of $\mathbb{F}_2^n$ with the ordering $\prec$, { since $\prec$ has the property of being a well-$\;$founded ordering or  Noetherian ordering (i.e., any descending chain of words is finite) \cite[Theorem 4.62, p. 168]{Weisp}}.  
The word ${\mathbf 0} \in \mathrm{CL}({\mathcal C})$ and by definition it belongs to $\mathrm{List}$, let $\mathbf w \in \mathrm{CL}({\mathcal C})\setminus \{\mathbf 0\}$.
Assume the property valid for any word less than $\mathbf w$ with respect to $\prec$, i.e. $\mathbf u \in \mathrm{List}$ provided $\mathbf u \in \mathrm{CL}(\mathcal C)$ and $\mathbf u \prec \mathbf w$. Taking $i \in \mathrm{supp}(\mathbf w)$, we write ${\mathbf w}={\mathbf u}+{\mathbf e}_i$ where $\mathbf u \in \mathbb{F}_2^n$, then by Theorem~\ref{t:cl}, $\mathbf u \in \mathrm{CL}({\mathcal C})$. In addition, $\mathrm{supp}(\mathbf u)=\mathrm{supp}(\mathbf w)-1$, thus $\mathbf u \prec \mathbf w$.  Therefore, by applying the induction principle we have $\mathbf u \in \mathrm{List}$. If $\mathbf u$ is a coset leader belonging to $\mathrm{List}$ it is clear that $\mathrm{weight}(\mathbf u)=\mathrm{weight}(\mathrm{N}(\mathbf u))$. As a consequence, by \textit{3.} in Definition~\ref{d:list}, $\mathbf w = \mathbf u + \mathbf e_i\in \mathrm{List}$. \qed
\end{proof}

\begin{alg}[CLBC]\label{a:cl} $ $
\begin{description}
 \item[\textbf{Input:}] A parity check matrix of a binary code $\mathcal C$.
\item[\textbf{Output:}] $\mathrm{CL}(\mathcal C),\phi$: The set of all coset leaders and the function Matphi.
\end{description}
\begin{algorithmic}[1]
\STATE $\mathrm{List}\leftarrow [\mathbf 0]$, $\mathrm{N}\leftarrow \emptyset$, $r\leftarrow 0$, $\mathrm{CL}(\mathcal C)\leftarrow []$ 
\WHILE{$\mathrm{List}\neq \emptyset$}
\STATE $\tau \leftarrow \mathrm{NextTerm}[\mathrm{List}]$, $\mathbf s\leftarrow \tau H$ 
\STATE $j\leftarrow \mathrm{Member}[\mathbf s,\{\mathbf s_1,\dots,\mathbf s_r\}]$
\IF{$j\neq\mathrm{false}$}
\STATE {\bf for} $k$ such that $\tau=\tau^\prime + \mathbf e_k$ with $\tau^\prime\in \mathrm{N}$ {\bf do} $\phi(\tau^\prime,\mathbf e_k)\leftarrow \tau_j$
\IF{$\mathrm{weight}(\tau)=\mathrm{weight}(\tau_j)$}
\STATE { 
%>No es $\mathrm{CL}(\mathcal C)$ en los dos lados?}$\mathrm{CL}(\tau_j)\leftarrow \mathrm{CL}(\tau_j) \cup \{\tau\}$
$\mathrm{CL}(\mathcal C)[\tau_j]\leftarrow \mathrm{CL}(\mathcal C)[\tau_j]\cup \{\tau\}$}
\STATE $\mathrm{List}\leftarrow \mathrm{InsertNext}[\tau,\mathrm{List}]$
\ENDIF
\ELSE
\STATE $r\leftarrow r+1$, $\mathbf s_r\leftarrow\mathbf s$, $\tau_r\leftarrow \tau$, $\mathrm{N}\leftarrow \mathrm{N}\cup\{\tau_r\}$
\STATE {
%>No es $\mathrm{CL}(\mathcal C)$ ?
$\mathrm{CL}(\mathcal C)[\tau_r]\leftarrow \mathrm{CL}(\mathcal C)[\tau_r]\cup \{\tau\}$}
%$\mathrm{CL}(\tau_r) \leftarrow \{\tau_r\}$
\STATE $\mathrm{List}\leftarrow \mathrm{InsertNext}[\tau_r,\mathrm{List}]$ 
\FOR{$k$ such that $\tau_r=\tau^\prime + \mathbf e_k$ with $\tau^\prime\in \mathrm{N}$}
\STATE $\phi(\tau^\prime,\mathbf e_k)\leftarrow \tau_r$ 
\STATE $\phi(\tau_r,\mathbf e_k) \leftarrow \tau^\prime$ 
\ENDFOR
\ENDIF
\ENDWHILE
{
\RETURN $\mathrm{CL}(\mathcal C),\, \phi$} 
\end{algorithmic}
\end{alg}

Where 
\begin{enumerate}
\item $\mathrm{InsertNext}[\tau,\mathrm{List}]$ Inserts all the sums $\tau + \mathbf e_k$ in $\mathrm{List}$, where $k \notin \mathrm{supp}(\tau)$, and keeps $\mathrm{List}$ in increasing order w.r.t. the ordering $\prec$.
\item $\mathrm{NextTerm}[\mathrm{List}]$ returns the first element from $\mathrm{List}$ and deletes it from this set. 
\item $\mathrm{Member}[obj,G]$ returns the position $j$ of $obj$ in $G$ if $obj\in G$ and false otherwise.
\end{enumerate}

\begin{theorem}\label{t:correcto}
{\rm CLBC} computes the set of coset leaders of a given binary code and its corresponding {\rm Matphi}.
\end{theorem}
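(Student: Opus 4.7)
The plan is to prove three claims in sequence: termination of the main while loop, equality of the returned $\mathrm{CL}(\mathcal C)$ with the true set of coset leaders, and correctness of $\phi$ on every pair $(\mathbf n,\mathbf e_i)$ with $\mathbf n\in\mathrm{N}$.

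For termination, I would observe that $\mathrm{List}$ is enlarged only through $\mathrm{InsertNext}$, which is invoked in lines 9 and 13. Line 13 fires only when $\tau$ introduces a fresh syndrome (so $\tau$ is stored as a new $\tau_r$ in $\mathrm{N}$), and line 9 only when $\mathrm{weight}(\tau)=\mathrm{weight}(\tau_j)$. In both cases $\tau$ is a coset leader: in the first, the processing order $\prec$ together with Theorem~\ref{t:clInList} forbids any strictly smaller element of $\tau$'s coset from remaining unprocessed, so $\tau$ is minimum in its coset and therefore a coset leader by weight-compatibility of $\prec$; in the second, $\tau_j$ is already known to be a coset leader (by induction on processing order) and $\tau$ has the same weight. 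Consequently $\mathrm{InsertNext}$ is invoked at most $|\mathrm{CL}(\mathcal C)|$ times and deposits at most $n\cdot|\mathrm{CL}(\mathcal C)|$ elements in $\mathrm{List}$, so the loop empties $\mathrm{List}$ after finitely many iterations.

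For $\mathrm{CL}(\mathcal C)$ I would verify both inclusions. The forward inclusion (every recorded element is a coset leader) follows from the argument above: entries are added in line 8 or line 13, and in either case $\tau$ has been shown to be a coset leader. For the reverse inclusion, let $\mathbf w$ be any coset leader; Theorem~\ref{t:clInList} places it in $\mathrm{List}$ and the loop stops only once $\mathrm{List}$ is empty, so $\mathbf w$ is eventually popped at line 3; its syndrome either matches some $\tau_j$ (whence $\mathrm{weight}(\mathbf w)=\mathrm{weight}(\tau_j)$ and line 8 files $\mathbf w$ under $\mathrm{CL}(\mathcal C)[\tau_j]$) or is new (line 13 files $\mathbf w$ under $\mathrm{CL}(\mathcal C)[\mathbf w]$).

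For $\phi$, I would show that after termination $\phi(\mathbf n,\mathbf e_i)$ equals the $\mathrm{N}$-representative of the coset of $\mathbf n+\mathbf e_i$ for every $\mathbf n\in\mathrm{N}$ and every $i\in\{1,\dots,n\}$. The only assignments happen at lines 6, 15 and 16, and I would split on whether $i\in\mathrm{supp}(\mathbf n)$. If $i\notin\mathrm{supp}(\mathbf n)$, then $\mathbf n+\mathbf e_i$ has strictly greater weight than $\mathbf n$; the $\mathrm{InsertNext}$ call triggered when $\mathbf n$ entered $\mathrm{N}$ places $\mathbf n+\mathbf e_i$ into $\mathrm{List}$, and when this element is subsequently popped the writing $\mathbf n+\mathbf e_i=\mathbf n+\mathbf e_i$ with $\mathbf n\in\mathrm{N}$ triggers the correct assignment at line 6 (or line 15 if the syndrome is fresh). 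If $i\in\mathrm{supp}(\mathbf n)$, then $\mathbf n+\mathbf e_i$ has smaller weight than $\mathbf n$ and by Theorem~\ref{t:cl} is itself a coset leader, so it was processed before $\mathbf n$; when $\mathbf n$ enters $\mathrm{N}$ the for-loop at lines 14--17 ranges over the decompositions $\mathbf n=\mathbf n'+\mathbf e_k$ with $\mathbf n'\in\mathrm{N}$ and assigns the pair symmetrically. I expect the main obstacle to lie in this last sub-case: one must argue that $\mathbf n+\mathbf e_i$ is indeed the $\mathrm{N}$-representative of its coset at the instant $\mathbf n$ enters $\mathrm{N}$ (so that the decomposition with $k=i$ is captured by the loop), which requires a careful invariant tying the processing order $\prec$ to membership in $\mathrm{N}$.
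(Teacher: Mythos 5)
Your overall architecture --- termination, double inclusion for $\mathrm{CL}(\mathcal C)$, then correctness of $\phi$ --- is essentially the paper's own proof, which shows that the algorithm realizes $\mathrm{List}$ as in Definition~\ref{d:list} and then invokes Theorem~\ref{t:clInList}. The first two parts of your argument are sound (modulo harmless slips in line numbers, and the observation that a coset leader may be deposited in $\mathrm{List}$ several times via different parents, so $\mathrm{InsertNext}$ can fire more than $|\mathrm{CL}(\mathcal C)|$ times; finiteness, and hence termination, is unaffected).

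The one genuine gap is exactly the point you flag at the end and then leave open. For $\mathbf n\in\mathrm N$ and $i\in\mathrm{supp}(\mathbf n)$ it is not enough that $\mathbf n+\mathbf e_i$ is a coset leader (which Theorem~\ref{t:cl} gives): the for-loop at Steps 15--17 only ranges over decompositions $\mathbf n=\mathbf n'+\mathbf e_k$ with $\mathbf n'\in\mathrm N$, so if $\mathbf n+\mathbf e_i$ were a coset leader that is \emph{not} the $\mathrm N$-representative of its coset, the pair $(\mathbf n,\mathbf e_i)$ would never be assigned and $\phi$ would be only partially defined. The paper closes this with the admissibility of $\prec_1$ (hence the compatibility of $\prec$ with disjoint sums): if $\mathbf n$ is the $\prec$-least element of $\mathrm{List}$ in its coset, then every subword $\mathbf n'=\mathbf n+\mathbf e_k$, $k\in\mathrm{supp}(\mathbf n)$, is the $\prec$-least element of $\mathrm{List}$ in \emph{its} coset --- otherwise a smaller representative $\mathbf u\prec\mathbf n'$ would produce $\mathbf u+\mathbf e_k\prec\mathbf n'+\mathbf e_k=\mathbf n$ in the coset of $\mathbf n$ and inside $\mathrm{List}$, contradicting $\mathbf n=\mathrm N(\mathbf n)$. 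Hence $\mathbf n'=\mathrm N(\mathbf n')\in\mathrm N$ and the loop does capture the decomposition with $k=i$. You need to supply this lemma (or an equivalent invariant relating $\prec$-minimality to membership in $\mathrm N$); with it, your proof coincides with the paper's.
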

\begin{proof}
Note that when an element $\tau$ is taken out from $\mathrm{List}$ by $\mathrm{NextTerm}$ the elements to be inserted in $\mathrm{List}$ by $\mathrm{InsertNext}$ are of the form $\tau + \mathbf e_k$, where $k \notin \mathrm{supp}(\tau)$. As a consequence, $\tau + \mathbf e_k \succ \tau$. Then all elements generated by {\rm CLBC} in $\mathrm{List}$, after $\tau$ is taken out, are greater than $\tau$. Therefore, when $\tau$ is the first element in $\mathrm{List}$ in Step~3, all elements of $\mathrm{List}$ that shall be analyzed by {\rm CLBC} are greater than $\tau$.

Let us prove that the procedure generates $\mathrm{List}$ according to Definition~\ref{d:list}. First, by Step~1, $\mathbf 0 \in \mathrm{List}$. Let $\tau = \mathrm{NextTerm}[\mathrm{List}]$ in Step 3, in this step the syndrome $\mathbf s$ of $\tau$ is computed. Thus we have two cases regarding the result of Step~4, namely

\begin{enumerate}
 \item Assume $j = \hbox{``false''}$, thus Steps 5 and 11 guaranty us to find $\mathrm{N}(\tau)$ as the least element in $\mathrm{List}$ having the same syndrome as $\tau$. When we are in this case ($\tau = \mathrm{N}(\tau)$), in Step~14 it is performed (ii) of Definition~\ref{d:list}.
\item On the other hand, assume $j \neq \hbox{``false''}$ (an element $\mathrm{N}(\tau)=\tau_j$ have been already computed), if we have $\mathrm{weight}(\tau)=\mathrm{weight}(\tau_j)$, in Step~9 it is performed (ii) of Definition~\ref{d:list}.
\end{enumerate}
Therefore {\rm CLBC} constructs the object $\mathrm{List}$ following Definition~\ref{d:list}.
By Theorem~\ref{t:clInList}, the set of coset leaders is a subset of $\mathrm{List}$. Then Steps 8 and 13 assure the computation of this set.
The procedure computes the {\rm Matphi} structure as a direct consequence of Steps~6, 16, 17. The reason for including Step~17 is that in this case $$\tau_r + \mathbf e_k=\tau^\prime\prec \tau_r$$ and those elements $\tau_r+\mathbf e_k$, where $k\in \mathrm{supp}(\tau_r)$, are not inserted  in $\mathrm{List}$ since $\tau^\prime$ has been already considered when $\tau_r$ is computed as a new element of $\mathrm{N}$. In addition, { since $\prec$ is admissible}, all subwords of $\tau_r$ are also least elements of their cosets according to $\prec$, so $\tau^\prime=\mathrm{N}(\tau^\prime)$ (note $\tau^\prime$ is a coset leader and by Theorem~\ref{t:clInList} it belongs to $\mathrm{List}$).
                               
We have proved that {\rm CLBC} guarantees the outputs we are claiming. Termination is a consequence of the fact that the cardinal of the set of elements belonging to $\mathrm{List}$ is least than $n|\mathrm{CL}({\mathcal C})|$. Then, to a certain extent (when the set of coset leaders has been computed) no more elements are inserted in $\mathrm{List}$ in Steps~9 and 14. Therefore, the list get empty, and by Step~2, {\rm CLBC} terminates.\qed
\end{proof}
\begin{remark}\label{r:alg} $\,$
\begin{enumerate}
\item We introduce in this paper steps 7, 8 and 9. In our predecessor algorithms where it was not necessary to introduce them \cite{bbm2,bbm1,bbm3}.  These steps, as well as a modified definition of list,  are forced by the fact that  all coset leaders are incorporated.
\item In Steps from 14 to 17, the pairs $(\tau_r, \mathbf e_k)$, where $k \in \mathrm{supp}(\tau_r)$, are not necessary to compute the coset leaders but for computing the structure {\rm Matphi} and could be removed if we are only interested in the coset leaders.
\end{enumerate}
\end{remark}
\begin{example}\label{s:exm}  
Consider the $[10,4,1]$ code  with parity check matrix $H$

$$H=\left( \begin{array}{cccccccccc}
1& 0& 0& 0& 1& 0& 0& 0& 0 & 0\\
1& 0& 1& 1& 0& 1& 0& 0& 0& 0\\
1& 1& 0& 1& 0& 0& 1& 0& 0& 0\\
1& 1& 1& 0& 0& 0& 0& 1& 0& 0\\
1& 1& 1& 1& 0& 0& 0& 0& 1& 0\\
1& 1& 1& 1& 0& 0& 0& 0& 0& 1
\end{array}
\right)  .$$

We use GAP 4.12 \cite{GAP} and the GAP's package GUAVA 3.10 for Coding Theory. We have built in this framework a collection of programs we call GBLA\_LC ``Gr\"obner Bases by Linear Algebra and Linear Codes" \cite{GBLA}. In particular, we have run the function ``{\rm CLBC}" of GBLA\_LC (Coset Leaders of Binary Codes), it gives a list of three objects as an output, the first one is the set of coset leaders, the second one the function Matphi, and the third one the error correcting capability of the code. The complete set of coset leaders is the list below with 64 components and each component corresponds to a coset with its cosets leaders, the set $\mathrm{N}$ is composed by the first elements of each component. We have indicated with arrows some places in the list below, which we are going to use during the example. The elements in the list $\mathrm{CL}({\mathcal C})$ are

$$\begin{array}{l}
[\,[ 1 ], [ e_1 ], [ e_2 ], [ e_3 ], [ e_4 ], [ e_5 ], [ e_6 ], [ e_7 ],[e_8 ], [ e_9 ], [ e_{10} ],\\

[ e_1+e_2,\rightarrow e_5 + e_6\leftarrow],[ e_1+e_3, e_5+e_7],[ e_1+e_4, e_5+e_8 ],\\

[e_1+e_5, e_2+e_6, e_3+e_7, e_4+e_8 ],[ e_1+e_6, e_2+e_5 ],[e_1+e_7, e_3+e_5 ],\\ 

[ e_1+e_8, \rightarrow e_4+e_5\leftarrow],[ e_1+e_9 ],[ e_1+e_{10} ],[ e_2+e_3, e_6+e_7 ],\\

[ e_2+e_4, e_6+e_8 ],[ e_2+e_7, e_3+e_6 ],[ e_2+e_8, \rightarrow e_4+e_6\leftarrow],[ e_2+e_9 ], [ e_2+e_{10} ],\\

[ e_3+e_4, e_7+e_8 ],  [ e_3+e_8, e_4+e_7 ], [ e_3+e_9 ], [ e_3+e_{10} ],\\

[ e_4+e_9 ], [ e_4+e_{10} ], [ e_5+e_9 ], [ e_5+e_{10} ], [ e_6+e_9 ],[ e_6+e_{10} ],\\

[ e_7+e_9 ], [ e_7+e_{10} ], [ e_8+e_9 ], [ e_8+e_{10} ],[ e_9+e_{10} ],\\

[ e_1+e_2+e_3, e_1+e_6+e_7, e_2+e_5+e_7, e_3+e_5+e_6 ],\\

\rightarrow[ e_1+e_2+e_4, e_1+e_6+e_8, e_2+e_5+e_8, e_4+e_5+e_6 ]\leftarrow,\\

[ e_1+e_2+e_7, e_1+e_3+e_6, e_2+e_3+e_5, e_5+e_6+e_7 ],\\

[ e_1+e_2+e_8, e_1+e_4+e_6, e_2+e_4+e_5, e_5+e_6+e_8 ],\\

[ e_1+e_2+e_9, e_5+e_6+e_9 ], [ e_1+e_2+e_{10}, e_5+e_6+e_{10} ],\\

[ e_1+e_3+e_4, e_1+e_7+e_8, e_3+e_5+e_8, e_4+e_5+e_7 ],\\

[ e_1+e_3+e_8, e_1+e_4+e_7, e_3+e_4+e_5, e_5+e_7+e_8 ],\\

[ e_1+e_3+e_9, e_5+e_7+e_9 ], [ e_1+e_3+e_{10}, e_5+e_7+e_{10} ],[ e_1+e_4+e_9, e_5+e_8+e_9 ],\\

[ e_1+e_4+e_{10}, e_5+e_8+e_{10} ],[ e_1+e_5+e_9, e_2+e_6+e_9, e_3+e_7+e_9, e_4+e_8+e_9 ],\\

[ e_1+e_5+e_{10}, e_2+e_6+e_{10}, e_3+e_7+e_{10}, e_4+e_8+e_{10} ],[ e_1+e_6+e_9, e_2+e_5+e_9 ],\\

[ e_1+e_6+e_{10}, e_2+e_5+e_{10} ],[ e_1+e_7+e_9, e_3+e_5+e_9 ], [ e_1+e_7+e_{10}, e_3+e_5+e_{10} ],\\

[ e_1+e_8+e_9, e_4+e_5+e_9 ], [ e_1+e_8+e_{10}, e_4+e_5+e_{10} ],[ e_1+e_9+e_{10} ],\\

[ e_2+e_3+e_8, e_2+e_4+e_7, e_3+e_4+e_6, e_6+e_7+e_8 ],[ e_5+e_9+e_{10}]\,]
\end{array}$$

Note that $\mathbf y= \mathbf e_4+\mathbf e_5+ \mathbf e_6 $ in $\mathrm{CL}({\mathcal C})_{46}=[ e_1+e_2+e_4, e_1+e_6+e_8, e_2+e_5+e_8, e_4+e_5+e_6 ]$ (pointed by arrows) is a coset leader such that none subword of $\mathbf y$ is in $\mathrm{N}$ (see the previous elements pointed by arrows). This shows the importance of considering all coset leaders in \textit{3.} of Definition~\ref{d:list} and not only the coset leaders belonging to $\mathrm{N}$ like in previous works.

The algorithm could be adapted without incrementing the complexity to get more information like the Covering radius and the Newton radius. In this case by analyzing the last element $\mathrm{CL}({\mathcal C})_{64}=[ e_5+e_9+e_{10}]$ we have a coset of highest weight which also contains only one leader; therefore, the Covering radius and the Newton radius are equal to 3. Moreover, for computing these parameters the algorithm do not need to run until the very end.

As we state before, the set $\mathrm{N}$ would be enough to compute the Weight Distribution of the Coset Leaders $\mathrm{WDCL}=(\alpha_0,\ldots,\alpha_n)$ where $\alpha_i$ is the number of cosets with coset leaders of weight $i$, $i=1,\ldots , n$  of the code. We can provide also an object which gives more information about the structure of the code, that would be the numbers of coset leaders in each coset ($\mathrm{\#(CL)}$).$$\mathrm{WDCL}=[1,10,30,23,0,0,0,0,0,0,0].$$
$$\begin{array}{l}
\mathrm{\#(CL)}=[ 1, 1, 1, 1, 1, 1, 1, 1, 1, 1, 2, 1, 1, 1, 2, 2, 4, 1, 1, 1, 1, 2, 2, 2, 2, 2, 2, 4, 2, 2, 2, 4, 4,\\
\hspace{1.4cm}1, 1, 1, 4, 4, 4, 2, 2, 2, 2, 2, 4, 4, 1, 1, 1, 2, 2, 2, 2, 4, 1, 1, 1, 2, 2, 2, 1, 1, 1, 1 ].
\end{array}$$
It is also interesting to note that there are more cosets with one leader (19) out the cosets corresponding to $B({\mathcal C},t)$ than cosets corresponding to $B({\mathcal C},t)$ $$|\{\mathrm{CL}(\mathbf y)\mid\mathbf y\in B({\mathcal C},t)\}|=11.$$ Therefore, there are $30$ of the $64$ cosets where the \rm{CDP} has a unique solution. 
\end{example}

\section{Complexity Analysis}\label{s:comp}
For a detailed complexity analysis and some useful considerations from the computational point of view we refer the reader to \cite{bbm1}. Section~6 of that paper is devoted to discuss in details about the computational complexity and space complexity of the setting for computing Gr\"obner basis representation for binary codes and general linear codes. The method is also compared with other existent methods for similar purposes. 
In the case of this paper, the difference is that we work  out the set of all coset leaders and not only a set of canonical forms. Next theorem  shows an upper bound for the number of iterations that will perform {\rm CLBC}.

\begin{theorem}\label{t:comp}
{\rm CLBC} computes the set of coset leaders of a given binary code $\mathcal C$ of length $n$ after at most $\mathbf n|\mathrm{CL}(\mathcal C)|$ iterations.
\end{theorem}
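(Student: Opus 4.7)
The plan is to bound the total number of elements that ever enter $\mathrm{List}$, since each iteration of the main while loop removes exactly one element via $\mathrm{NextTerm}$. Elements arrive in $\mathrm{List}$ in only two ways: the single initial element $\mathbf 0$ set in Step~1, and the insertions produced by calls to $\mathrm{InsertNext}$ in Steps~9 and~14. The key claim I would establish is that every call to $\mathrm{InsertNext}$ is triggered by the processing of a coset leader, and that each such call adds at most $n$ new elements; this immediately yields the bound.

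First I would show that any $\tau$ reaching Step~14 or Step~9 is necessarily a coset leader. For Step~14, the condition $j=\mathrm{false}$ means no previously processed element has the same syndrome as $\tau$. Because $\prec$ is weight-compatible and $\mathrm{NextTerm}$ consumes $\mathrm{List}$ in increasing $\prec$-order, any element of $\mathcal C+\tau$ strictly lighter than $\tau$ would have been processed earlier, forcing $j\ne \mathrm{false}$. Combined with Theorem~\ref{t:clInList} (which via Theorem~\ref{t:correcto} guarantees that every coset leader does appear in $\mathrm{List}$), this forces $\tau$ itself to be of minimum weight in $\mathcal C+\tau$, i.e.\ a coset leader. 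For Step~9, the stored $\tau_j=\mathrm N(\tau)$ was added to $\mathrm N$ earlier and is a coset leader by the same argument; the test $\mathrm{weight}(\tau)=\mathrm{weight}(\tau_j)$ then certifies that $\tau$ too attains the minimum weight of its coset.

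Next I would observe that each coset leader is processed at most once: $\mathrm{InsertNext}$ inserts only elements $\tau+\mathbf e_k$ with $k\notin \mathrm{supp}(\tau)$, which satisfy $\tau+\mathbf e_k\succ\tau$, so once $\tau$ is extracted it can never re-enter $\mathrm{List}$. Therefore $\mathrm{InsertNext}$ is invoked at most $|\mathrm{CL}(\mathcal C)|$ times, and each invocation contributes at most $n-|\mathrm{supp}(\tau)|\leq n$ new elements. Accounting for the initial $\mathbf 0$, the total number of elements ever placed in $\mathrm{List}$ is at most $1+n|\mathrm{CL}(\mathcal C)|$; the spare $+1$ is absorbed by noting that $\mathbf 0$ is itself one of the $|\mathrm{CL}(\mathcal C)|$ leaders counted, while every other coset leader has support of size at least one and so contributes at most $n-1$ insertions, giving the clean bound $n|\mathrm{CL}(\mathcal C)|$.

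The main obstacle is precisely the first step, namely certifying that $\mathrm{InsertNext}$ fires only on coset leaders. This is what separates the bound from the trivial $2^n$ and relies crucially on two ingredients that must be invoked together: the weight-compatibility of $\prec$ (so that the first-seen representative of a coset is of minimum weight among those in $\mathrm{List}$) and the completeness assertion of Theorem~\ref{t:clInList} (so that no coset leader is missing from $\mathrm{List}$).
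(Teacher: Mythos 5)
Your proposal is correct and follows essentially the same route as the paper: both arguments identify the number of iterations with the number of elements ever placed in $\mathrm{List}$ and observe that, beyond $\mathbf 0$, every inserted element has the form $\mathbf w+\mathbf e_i$ with $\mathbf w\in\mathrm{CL}(\mathcal C)$, giving the bound $n|\mathrm{CL}(\mathcal C)|$. The only difference is one of detail: the paper simply cites Theorem~\ref{t:correcto} for the fact that $\mathrm{InsertNext}$ fires only on coset leaders and writes $\mathrm{List}=\{\mathbf w+\mathbf e_i\mid \mathbf w\in\mathrm{CL}(\mathcal C),\ i\in\{1,\ldots,n\}\}$, whereas you re-derive that fact from weight-compatibility and Theorem~\ref{t:clInList} and are more careful with the count of insertions per leader.
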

\begin{proof}
Note that the number of iterations is exactly the size of $\mathrm{List}$. In the proof of Theorem~\ref{t:correcto} was shown that the algorithm follows this definition to construct the object $\mathrm{List}$. It is clear that the size of $\mathrm{List}$ is bounded by $n|\mathrm{CL}(\mathcal C)|$, note that we can write $\mathrm{List}$, as a set as follows

$$\mathrm{List}=\{\mathbf w + \mathbf e_i\mid \mathbf w\in\mathrm{CL}(\mathcal C)\mbox{ and }\in \left\lbrace 1,\ldots ,n\right\rbrace \}.$$
\qed
\end{proof}
\begin{remark}
\begin{enumerate}
\item By the proof above we require a memory space of $O(n|\mathrm{CL}(\mathcal C)|)$. We assume that for computing the set of coset 
leaders it is required at least $O(|\mathrm{CL}(\mathcal C)|)$; therefore, {\rm CLBC} 
is near the optimal case of memory requirements.
\item {\rm CLBC} generates at most $n|\mathrm{CL}(\mathcal C)|$ words from $\mathbb{F}_2^n$ to compute the set of coset leaders. An algorithm for computing this set needs to generate from $\mathbb{F}_2^n$ at least a subset formed by all coset leaders, i.e. $|\mathrm{CL}(\mathcal C)|$; therefore, the algorithm is near the optimal case of computational complexity.
\end{enumerate}
\end{remark}

\section{Conclusion}\label{s:con}
The Algorithm CLBC is formulated in this paper, which turns out to be quite efficient for computing all coset leaders  of a binary code from memory requirements and computational complexity view.  Although, as it is expected, the complexity of the algorithm is exponential (in the number of check positions).

The difference of {\rm CLBC} with its predecessors rely also in the computation of all coset leaders instead of a set of representative leaders for the cosets, however, it supports the computation of the function {\rm Matphi}. We remark that the computation of {\rm Matphi} is not necessary at all for the main goal of computing the coset leaders. We have kept this resource in the algorithm because this structure provides some computational advantages (see \cite{bbm2,bbm1,bbm3}) and, although the algorithm will be clearly faster without computing {\rm Matphi}, the nature of the computational complexity and space complexity will remain the same.

Unfortunately, a generalization to non-binary linear codes is not trivial from this work. The main reason seems to be that the solution based on Hamming weight compatible orderings will not continue being possible; the \textit{error vector ordering} we have used in the general approach \cite{bbm1} is not a total ordering, although it allowed to set up the computational environment in order to compute Gr\"obner representations for linear codes.

\end{document}